\newtheorem{proposition}{$\mathbf{Proposition}$}
\begin{document}
\title{Training Design and Channel Estimation in Uplink Cloud Radio Access Networks \thanks{ M. Peng and H. Vincent Poor are with the School of Engineering and
Applied Science, Princeton University, Princeton, NJ, USA. X. Xie
is with Wireless Signal Processing and Network Lab, Key
Laboratory of Universal Wireless Communications, Ministry of
Education, Beijing University of Posts \text{\&} Telecommunications,
Beijing, China.}}

\author{\IEEEauthorblockN{Xinqian Xie, Mugen Peng, H. Vincent Poor}}

\maketitle
\IEEEpeerreviewmaketitle
\begin{abstract}
To decrease the training overhead and improve the channel estimation
accuracy under time-varying environments in uplink cloud radio
access networks (C-RANs), a superimposed-segment training design is
proposed whose core idea is that each mobile station puts a periodic
training sequence on the top of the data signal, and remote radio
heads (RRHs) insert a separate pilot prior to the received signal
before forwarding to the centralized base band unit (BBU) pool.
Moreover, a complex-exponential basis-expansion-model (CE-BEM) based
maximum a posteriori probability (MAP) channel estimation algorithm is
developed, where the BEM coefficients of access links (ALs) and the
channel fading of wireless backhaul links (BLs) are first obtained,
after which the time-domain channel samples of ALs are restored in terms of
maximizing the average effective signal-to-noise ratio (AESNR).
Simulation results show that the proposed channel estimation
algorithm can effectively decrease the estimation mean square error and increase the AESNR in C-RANs, thus significantly outperforming the existing solutions.
\end{abstract}
\begin{keywords}
Channel estimation, cloud radio access networks, time-varying environment.
\end{keywords}

\section{Introduction}

Cloud radio access networks (C-RANs) have received considerable research interest as one of the most promising solutions to
mitigate interference, fulfil energy efficiency demands and support
high-rate transmission in the fifth generation cellular network
\cite{1}. In C-RANs, a large number of remote radio heads (RRHs) are
deployed, which operate as non-regenerative relays to forward received signals
from mobile stations (MSs) to the centralized base band unit (BBU)
pool through wire/wireless backhaul links for uplink transmission
\cite{2}. To suppress the inter-RRH interference by using
cooperative processing techniques at the BBU pool, the channel state
information (CSI) of both the radio access links (ALs) and wireless
backhaul links (BLs) are required\cite{3}. In \cite{4},
though a segment-training scheme was proposed to estimate the
individual channel coefficients for two-hop scenario under flat
fading environments, this proposal results in high overhead
consumption for backhaul transmission since the RRHs need to forward
both AL and BL training sequences to the BBU pool \cite{5}. On the
other hand, the superimposed-training scheme \cite{6}, where a
training sequence is superimposed on the data signal, can
significantly reduce the overhead and is valid to
perform channel estimation for time-varying environments using
complex-exponential basis expansion model (CE-BEM) \cite{7}.
However, straightforward implementation of superimposed training
in C-RANs would degrade transmission quality due to the fact that
superimposing both AL and BL training sequences on the data signal
declines the effective signal-to-noise ratio (SNR) \cite{8}.

Motivated to reduce the training overhead and enhance the channel
estimation performance at the BBU pool, a superimposed-segment
training design is proposed in this letter, where superimposed-training is implemented for the radio AL while the
segment-training is applied for the wireless BL. Moreover, based on
the training design, a CE-BEM based maximum a posteriori probability (MAP) channel estimation algorithm is
developed, where the BEM coefficients of the time-varying radio AL
and the channel fading of the quasi-static wireless BL are first
obtained, after which the time-domain channel samples of the radio AL are
restored in terms of maximizing the average effective SNR.

\emph{Notation}: The transpose, Hermitian and inverse of a matrix
are denoted by $\left(\cdot\right)^{T}$, $\left(\cdot\right)^{H}$
and $\left(\cdot\right)^{-1}$, respectively; $\|\cdot\|$ represents
the two-norm of a vector; $|\cdot|$ defines the magnitude of a
complex argument; $\otimes$ is the Kronecker product;
$\mathrm{diag}\left(a_{1}, a_{2},\ldots,a_{M}\right)$ denotes the
diagonal matrix with $a_{m}$ being the $m$th diagonal element;
$\mathrm{tr}\left(\cdot\right)$ represents the trace of a matrix;
$\mathbf{I}_{N}$ and $\mathbf{0}_{N}$ are the $N\times N$ unit diagonal
matrix and zero matrix, respectively; $\mathbf{1}_{N}$ stands for
the $N \times 1$ vector with each entry being unit value;
$\mathcal{E}\{\cdot\}$ denotes the expectation of a random
variable, and $\hat{x}$ represents its estimate.

\section{System Model and Training Design}

Consider a C-RAN consisting of one BBU pool and multiple RRHs depicted in Fig. \ref{fig1}. The RRHs operate in half-duplex modes, and different MSs served by the same RRU are allocated with a single subcarrier through the orthogonal frequency division multiplexing access (OFDMA) technique. It is assumed that MSs
move continuously, while RRUs remain fixed. Thus, the channels of
radio ALs would vary during one transmission block, while those of
wireless BLs undergo quasi-static flat fading. Due to the orthogonality characteristics of OFDMA for the accessing of multiple MSs, we can focus on the transmission of only a single MS. Let $\mathbf{b}$ and $\mathbf{t}_{s}$ denote the data vector and cyclical training sequence transmitted from the MS, respectively. The training sequence from the RRH is denoted by $\mathbf{t}_{r}$. The $n$-th channel sample of the time-varying radio AL is denoted by $h\!\left(n\right)$ with mean zero and variance $\upsilon_{h}$, while the channel fading of the quasi-static flat BL is denoted by $g$ which has the complex Gaussian distribution of mean zero and variance $\upsilon_{g}$. The transmit power of the MS and RRH are denoted by $P_{s}$ and $P_{r}$, respectively. The noise variance at the RRH and BBU pool is denoted by $\sigma_{n}^{2}$. It is assumed that the BBU pool acquires the knowledge of $\mathbf{t}_{s}$, $\mathbf{t}_{r}$, $\upsilon_{h}$, $\upsilon_{g}$, $P_{s}$, $P_{r}$, and $\sigma_{n}^{2}$.

During each transmission block, the MS transmits a signal $\mathbf{s}$ of
$N_{s}$ symbol length to the RRU, in which the $n$-th entry of $\mathbf{s}$ is given
by
\begin{align}
s\!\left(n\right)\!=\!\sqrt{1\!-\!\epsilon}b\!\left(n\right)\!+\!\sqrt{\epsilon} t_{s}\!\left(n\right),\quad 0\!\leq\!n\!\leq\!\left(N_{s}\!-\!1\right),
\end{align}
where $b\!\left(n\right)$ denotes the $n$-th entry of $\mathbf{b}$ with M-ary phase shift keying (MPSK) modulation constrained by $\mathcal{E}\!\big\{|b\!\left(n\right)|^{2}\!\big\}\!\!=\!\!P_{s}$, and
$t_{s}\!\left(n\right)$ represents the $n$-th entry of $\mathbf{t}_{s}$ with
$|t_{s}\!\left(n\right)\!|^{2}\!\!=\!\!P_{s}$ whose period is
denoted by $N_{p}$. The $\epsilon$ is within $0\!<\!\epsilon\!<\!1$.
Without loss of generality, we further assume that
$K\!=\!\frac{N_{s}}{N_{p}}$ is an integer\cite{7}. The $n$-th
observation at the RRH is written as
\begin{align}
x_{R}\!\left(n\right)\!=\!h\!\left(n\right)s\!\left(n\right)\!+\!w_{R}\!\left(n\right),\quad 0\!\leq\!n\!\leq\!\left(N_{s}\!-\!1\right),
\end{align}
where $w_{\!R}\!\left(n\right)$ is additive white Gaussian noise
(AWGN) at the RRH. Then the RRH scales the received
signal by
$\alpha\!\!=\!\!\sqrt{\frac{P_{r}}{\upsilon_{h}P_{s}\!+\!\sigma_{n}^{2}}}$, and inserts $\mathbf{t}_{r}$ prior to the received signal. The sequence $\mathbf{t}_{r}$ is of $N_{r}$ length and its $n$-th entry satisfies $|t_{r}\!\left(n\right)\!|^{2}\!\!=\!\!P_{r}$. The BBU pool receives two separate signals as
\begin{align}
\mathbf{x}_{s}\!&=\!\alpha g \cdot \mathrm{diag}\!\left(\mathbf{h}\right)\cdot\mathbf{s}\!+\!\alpha g \mathbf{w}_{R}\!+\!\mathbf{w}_{Ds},\label{EQ9}\\
\mathbf{x}_{r}\!&=\!g \mathbf{t}_{r}\!+\!\mathbf{w}_{Dr},
\end{align}
where
$\mathbf{w}_{\!R}\!\!\!=\!\!\!\left[w_{\!R}\!\left(1\right)\ldots w_{\!R}\!\left(\!N_{s}\right)\right]^{T}$,
$\mathbf{w}_{\!Ds}\!\!\!=\!\!\!\left[w_{\!Ds}\!\left(1\right)\ldots w_{\!Ds}\!\left(N_{s}\right)\right]^{T}$
and
$\mathbf{w}_{\!Dr}\!\!=\!\!\left[w_{Dr}\!\left(1\right),\ldots,w_{Dr}\!\left(N_{r}\right)\right]^{T}$
are AWGN vectors with each entry having variance $\sigma_{n}^{2}$.
In order to perform coherent reception and adopt cooperative
processing techniques at the BBU pool, the knowledge of
$h\!\left(n\right)$'s and $g$ should be obtained, which is explained
in the following section.

\begin{figure}[!h]
\center
  \includegraphics[width=0.5\textwidth]{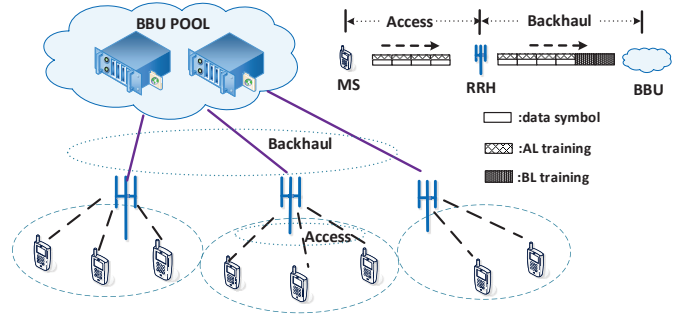}
 \caption{System model and the superimposed-segment training design.} \label{fig1}
\end{figure}

\section{CE-BEM Based MAP Channel Estimation Algorithm}
The CE-BEM for time-selective but frequency-flat fading channels in \cite{9}\cite{10} is chosen to model the time-varying AL as
\begin{align}\label{EQ8}
h\!\left(n\right)\!=\!\sum_{q\!=\!-Q}^{Q}\!\lambda_{q}e^{j\frac{2\pi q n}{N_{s}}},\quad 0\!\leq\!n\!\leq\!\left(N_{s}\!-\!1\right),
\end{align}
where $\lambda_{q}$'s are the BEM coefficients that remain invariant within one transmission block, and $\lambda_{q}$ has the complex Gaussian distribution with mean zero
and variance $\upsilon_{q}$ constrained to $\sum_{q\!=\!-Q}^{Q}\upsilon_{q}\!=\!\upsilon_{h}$. Substituting \eqref{EQ8} into \eqref{EQ9}, $\mathbf{x}_{r}$ can be rewritten as
\begin{align}
\mathbf{x}_{s}\!=\!\alpha g \mathbf{D} \left[\mathbf{I}_{2Q\!+\!1}\!\otimes\!\mathbf{s}\right]\boldsymbol{\lambda}\!+\!\alpha g \mathbf{w}_{R}\!+\!\mathbf{w}_{Ds},
\end{align}
where $\mathbf{D}\!\!=\!\!\big[\mathbf{D}_{-Q},\ldots,\mathbf{D}_{Q}\big]$ is an $N_{s}\!\times\!\left(2Q\!+\!1\right)N_{s}$ dimensional matrix with $\mathbf{D}_{q}\!\!=\!\!\mathrm{diag}\!\left(1,e^{j\frac{2\pi q}{N_{s}}},\ldots,e^{j\frac{2\pi q \left(N_{s}\!-\!1\right)}{N_{s}}}\right)$. By defining $\mathbf{J}\!=\!\frac{1}{K}\!\mathbf{1}_{K}^{T}\!\otimes\!\mathbf{I}_{N_{p}}$ of $N_{p}\!\times\!N_{s}$ dimension, we can obtain
\begin{subnumcases}
{\mathbf{J}\mathbf{D}_{q}\mathbf{t}_{s}\!=\!}
\mathbf{\tilde{t}}_{s}, & $q = 0$,\\
0, & $q\neq 0$,
\end{subnumcases}
where $\mathbf{\tilde{t}}_{s}$ is an $N_{p} \times 1$ dimensional
vector. Left multiplying $\mathbf{x}_{s}$ by
$\left(\mathbf{I}_{Q\!+\!1}\!\otimes\!\mathbf{J}\right)\!\mathbf{D}^{H}$ yields
\begin{align}
\mathbf{r}\!=\!\left(\mathbf{I}_{Q\!+\!1}\!\otimes\!\mathbf{J}\right)\!\mathbf{D}^{H}\mathbf{x}_{s},
\end{align}
whose
$\left[\left(q\!+\!Q\right)\!N_{p}\!+\!1\right]$-th to
$\left[\left(q\!+\!Q\!+\!1\right)N_{p}\right]$-th entries denoted by
$\mathbf{r}_{q}$ can be expressed as
\begin{align}
\mathbf{r}_{q}\!=\!\alpha g \mathbf{\tilde{t}}_{s} \lambda_{q}\!+\!\underbrace{\alpha g\!\!\!\sum_{l=-Q}^{Q}\!\mathbf{J}\mathbf{D}_{l\!-\!q}\mathbf{b}\lambda_{l}\!+\!\mathbf{J}\mathbf{D}_{q}^{H}\left(
\alpha g \mathbf{w}_{R}\!\!+\!\!\mathbf{w}_{Ds}\right)}_{\mathbf{w}_{q}}.
\end{align}
It is testified that
\begin{align}
\mathcal{E}\!\big\{\mathbf{w}_{q1}\mathbf{w}_{q1}^{H}\big\}\!=\!\upsilon_{n}\mathbf{I}_{N_{p}},\quad
\mathcal{E}\!\big\{\mathbf{w}_{q1}\mathbf{w}_{q2}^{H}\big\}\!=\!\mathbf{0}_{N_{p}}
\end{align}
for $q_{1}\neq q_{2}$ with
$\upsilon_{n}\!\!=\!\!\frac{\alpha^{2}\upsilon_{g}\!\upsilon_{h}\!\left(1\!-
\!\epsilon\right)\!P_{s}\!+\!\left(\alpha^{2}\upsilon_{g}\!+\!1\right)\!\sigma_{n}^{2}}{K}$.
Although $\mathbf{w}_{q}$ is not a Gaussian vector, it is effective to use the Gaussian distribution to model the noise behavior for estimation problems \cite{14}. Thus, we choose the following function to be the nominal likelihood function of $\mathbf{r}$ as
\begin{align}
p\!\left(\mathbf{r}|\boldsymbol{\lambda},\!g\right)\!=\!\!\prod_{q\!=\!-Q}^{Q}\!\left(\!\frac{1}{\pi \upsilon_{n}}\!\right)^{\!\!N_{p}}\!\!\!e^{\!-\!\frac{\|\mathbf{r}_{q}\!-\!\alpha g \lambda_{q}\mathbf{\tilde{t}}_{s}\|^{2}}{\upsilon_{n}}}.
\end{align}

\subsection{Estimation for $\lambda_{q}$'s and $g$}
Defining $\boldsymbol{\theta}\!=\!\left[\lambda_{-Q},\ldots,\lambda_{Q},g\right]^{T}$, the MAP estimation for $\boldsymbol{\theta}$ gives
\begin{align}\label{EQ1}
&\hat{\boldsymbol{\theta}}\!=\!\arg \max_{\boldsymbol{\theta}}\!\Big\{p\!\left(\mathbf{r}|\boldsymbol{\lambda},\!g \right)\!
p\!\left(\mathbf{x}_{r}|g \right)\!
p\!\left(\boldsymbol{\lambda}\right) p\!\left(g\right)\Big\}\\
\!&=\!\arg \min_{\boldsymbol{\theta}}\!\Bigg\{\!\underbrace{\sum_{q\!=\!-Q}^{Q}\!\!\!
\bigg\{\!\frac{\|\mathbf{r}_{q}\!\!-\!\!\alpha g \lambda_{q}\mathbf{\tilde{t}}_{s}\|^{2}}{\upsilon_{n}}\!\!+\!\!{\frac{|\lambda_{q}|^{2}}{\upsilon_{q}}}\!\!
\bigg\}\!\!+\!\!\frac{\|\mathbf{x}_{r}
\!\!-\!\!g\mathbf{t}_{r}\|^{2}}{\sigma_{n}^{2}}\!\!+\!\!\frac{|g|^{2}}{\upsilon_{g}}}_{
\mathcal{L}\left(\boldsymbol{\theta}\right)}\!\Bigg\},\nonumber
\end{align}
where $p\!\left(\mathbf{x}_{r}|g \right)$, $\!
p\!\left(\boldsymbol{\lambda}\right)$ and $p\!\left(g\right)$ are Gaussian distribution functions.
With a given $g$, the estimate of $\lambda_{q}$ can be obtained as
\begin{align}\label{EQ2}
\hat{\lambda}_{q}\!=\!\frac{\alpha g^{H}\tilde{\mathbf{t}}_{s}^{H}\mathbf{r}_{q}}{\alpha^{2}|g|^{2}\|\mathbf{\tilde{t}}_{s}\|^{2}
\!+\!\frac{\upsilon_{n}}{\upsilon_{q}}},\!-\!Q\!\leq\!q\!\leq\!Q.
\end{align}
Substituting \eqref{EQ2} into \eqref{EQ1}, the estimate of $g$ can be obtained from
\begin{align}
\hat{g}\!=\!\arg \min_{g}\!\!\Bigg\{\!\!\underbrace{\frac{\|\mathbf{x}_{r}
\!\!-\!\!g\mathbf{t}_{r}\|^{2}}{\sigma_{n}^{2}}\!\!+\!\!\frac{|g|^{2}}{\upsilon_{g}}\!\!-\!\!
\frac{\alpha^{2}|g|^{2}\!\!\sum_{q\!=\!-Q}^{Q}\!|\mathbf{t}_{s}^{\!H}\mathbf{r}_{q}|^{2}}{\alpha^{2}\upsilon_{n}
\!\|\mathbf{\tilde{t}}_{s}\|^{\!2}|g|^{2}\!\!+\!\!\frac{\upsilon_{n}^{2}}{\upsilon_{q}}}}_{\mathcal{L}
\!\left(g\right)}\!\!\Bigg\}.
\end{align}
Note that, only the first term in $\mathcal{L}\!\left(g\right)$ relates to the phase of $g$ denoted by $\angle{g}$, and thus the estimate of $\angle{g}$ can be directly estimated from minimizing $\|\mathbf{x}_{r}\!-\!g\mathbf{t}_{r}\|^{2}$ as
\begin{align}\label{EQ3}
\widehat{\angle{g}}\!=\!\arg \min\!\Big\{\|\mathbf{x}_{r}
\!\!-\!\!g\mathbf{t}_{r}\|^{2}\!\Big\}\!=\!\angle{\left(\mathbf{t}_{r}^{H}\mathbf{x}_{r}\right)}.
\end{align}
The estimate of $|g|$ must be either a local minima of $\mathcal{L}\!\left(|g|\big|\angle{g}\right)$ or at the boundary $|g|\!=\!0$, which can be obtained from solving $\frac{\partial \mathcal{L}\!\left(|g|\big|\angle{g}\right)}{\partial |g|}\!=\!0$. Unfortunately, a closed-form expression for $|\hat{g}|$ is hard to derive since $\frac{\partial \mathcal{L}\!\left(|g|\big|\angle{g}\right)}{\partial |g|}$ is an $m$($\geq4$)th-order polynomial of $|g|$, and thus numerical methods such as one dimensional search are needed to compute the value of $\widehat{|g|}$. To reduce the complexity of such approaches, an iterative approach is developed whereby $\hat{g}$ is initialized from
\begin{align}\label{EQ4}
\hat{g}\!&=\!\arg \max_{g}\Big\{\!p\!\left(\mathbf{x}_{r}|g\right)p\!\left(g\right)\!\Big\}\!=\!\frac{\mathbf{t}_{r}^{H}\mathbf{x}_{r}}{\|\mathbf{t}_{r}\|^{2}\!+\!\frac{\sigma_{n}^{2}}{\upsilon_{g}}}.
\end{align}
With $\hat{g}$ obtained, $\hat{\lambda}_{q}$'s can be estimated
according to \eqref{EQ2} with $\hat{g}$ in place of $g$. Then,
$\hat{g}$ can be further updated by substituting
$\hat{\lambda}_{q}$'s into \eqref{EQ3} as
\begin{align}\label{EQ5}
\hat{g}\!=\!\frac{\sum_{q\!=\!-Q}^{Q}\!\!\frac{\alpha \hat{\lambda}_{q}^{H}\mathbf{\tilde{t}}_{s}^{H}\mathbf{r}_{q}}{\upsilon_{n}}\!+\!\frac{\mathbf{t}_{r}^{H}\mathbf{x}_{r}
}{\sigma_{n}^{2}}}{\sum_{q\!=\!-Q}^{Q}\!\!\frac{\alpha^{2}|\hat{\lambda}_{q}|^{2}\|\mathbf{\tilde{t}}_{s}\|^{2}}{\upsilon_{n}}\!+\!
\frac{\|\mathbf{t}_{r}\|^{2}}{\sigma_{n}^{2}}\!+\!\frac{1}{\upsilon_{g}}}.
\end{align}

\subsection{Restoration for Channel Samples $h\!\left(n\right)$'s}

With $\hat{\lambda}_{q}$'s obtained, $\hat{h}\!\left(n\right)$ is restored as
\vspace{-0.05in}
\begin{align}\label{EQ7}
\hat{h}\!\left(n\right)\!=\!\sum_{q\!=\!-\!Q}^{Q}\!\Big\{\eta_{q}\cdot
\hat{\lambda}_{q}e^{j\frac{2\pi q n}{N_{s}}}\Big\}, \quad 0\!\leq\!n\!\leq\! \left(N_{s}\!-1\right),
\end{align}
\vspace{-0.05in}
where $\eta_{q}$'s are real factors. The vector $\boldsymbol{\eta}$ that maximizes the average effective SNR (AESNR) \cite{11} denoted by $\boldsymbol{\eta}^{*}$ is obtained from
\begin{align}
\boldsymbol{\eta}^{*}\!\!=\!\! \arg \!\max_{\boldsymbol{\eta}} \! \underbrace{\frac{\mathcal{E}\!\bigg\{\mathcal{E}\!\Big\{
\big|\hat{g}\hat{h}\!\left(n\right)\big|^{2}\Big|\boldsymbol{\lambda},\!g\Big\}\bigg\}\!
\left(1\!-\!\epsilon\right)}{\mathcal{E}\!\bigg\{\!\mathcal{E}\!\!\Big\{\!\!
\big|\hat{g}\hat{h}\!\left(n\right)\!\!-\!\!gh\!\left(n\right)\!\!\big|^{2}\!\Big|
\boldsymbol{\lambda},g\Big\}\!\!+\!\!\left(\!|g|^{\!2}\!\!+\!\!\frac{1}{\alpha^{\!2}}
\right)\!\!\frac{\sigma_{n}^{2}}{P_{s}}\bigg\}}\!}_{\bar{\gamma}}.
\end{align}
Define $\phi_{q,n}\!\!=\!\!e^{\!j\frac{2\pi q n}{N_{s}}}$, and
denote $\Delta{g}$ and $\Delta{\lambda_{q}}$ to be the
estimation error of $g$ and $\lambda_{q}$, respectively. It is obtained that
\begin{align}\label{EQN1}
&\mathcal{E}\!\Big\{\!
\big|\hat{g}\hat{h}\!\left(n\right)\!\big|^{2}\!\Big|\boldsymbol{\lambda},\!g\!\Big\}\!\!=\!\!|g|^{2}\!\!\!
\sum_{q\!=\!-Q}^{Q}\!\!\!\eta_{q}^{2}\!\underbrace{\mathcal{E}\!
\Big\{\!|\Delta{\!\lambda_{q}}|^{\!2}\!\Big\}}
_{\delta_{\lambda_{q}}}\!+\!\bigg|\!\underbrace{\sum_{q\!=\!-Q}^{Q}\!\!\!\eta_{q}\!\lambda_{q}\phi_{q,n}}
_{\tilde{h}\!\left(n\right)}\!
\bigg|^{2}\!\!\underbrace{\!\mathcal{E}\!\Big\{\!|\Delta{g}|^{\!2}\!\Big\}}_{\delta_{g}}
\!+\!|g|^{2}\!\bigg|\!\sum_{q\!=\!-Q}^{Q}\!\!\!\eta_{q}\!\lambda_{q}
\phi_{q,n}\!\bigg|^{2}\!+\!\mathcal{E}\!
\Bigg\{\bigg|\Delta{g}\!\sum_{q\!=\!-Q}^{Q}\!\eta_{q}\Delta{\!\lambda_{q}}\phi_{q,n}\bigg|^{2}\Bigg\}.
\end{align}
Note that, $\delta_{\lambda_{q}},\delta_{g}\!\sim\!\mathcal{O}\!\left(\frac{\sigma_{n}^{2}}{P_{s}}\right)$ while the last term in \eqref{EQN1} has the order of $\mathcal{O}\!\!\left[\!\left(\frac{\sigma_{n}^{2}}{P_{s}}\right)^{\!2}\!\right]$, thus we remove the last term for high SNR approximation, i.e.,
\begin{align}
\mathcal{E}\!\Big\{\!
\big|\hat{g}\hat{h}\!\left(n\right)\!\big|^{2}\!\Big|\boldsymbol{\lambda},\!g\!\Big\}\!\!=\!|g|^{2}\!\!\!
\sum_{q\!=\!-Q}^{Q}\!\!\!\eta_{q}^{2}\delta_{\lambda_{q}}\!\!+\!\big|\tilde{h}\!\left(n\right)\!\!\big|^{2}\delta_{g}
\!\!+\!\!|g|^{2}\!\big|\tilde{h}\!\left(n\right)\!\big|^{2}.
\end{align}
Similarly,
\begin{align}
\mathcal{E}\!\Big\{\!
\big|\hat{g}\hat{h}\!\left(n\right)\!\!-\!\!gh\!\left(n\right)\!\big|^{2}\!\Big|\boldsymbol{\lambda},\!g\!\Big\}
\!\!\approx&\!|g|^{2}\!\!\!
\sum_{q\!=\!-Q}^{Q}\!\!\!\eta_{q}^{2}\delta_{\lambda_{q}}\!\!+\!\big|\tilde{h}\!\left(n\right)\!\big|^{2}\delta_{g}
\!\!+\!\!|g|^{2}\!\bigg|\!\sum_{q\!=\!-Q}^{Q}\!\left(\eta_{q}\!\!-\!\!1\right)\!\lambda_{q}
\phi_{n,q}\!\bigg|^{2}.
\end{align}
Due to the non-linearity of the MAP estimation, it is hard to derive the corresponding MSE expressions in closed forms. Moreover, it is known that the MAP estimation MSEs converge to the Cram\'{e}r-Rao bound (CRB) \cite{12} when the training length is sufficiently large, and thus it is effective to use the CRBs in the computation of the AESNR as
\begin{align}
\delta_{\!\lambda_{q}}\!\!=\!\!\mathrm{CRB}_{\!\lambda_{q}}\!\!=\!\!\frac{\upsilon_{n}}{\alpha^{2}|g|^{2}\|\mathbf{\tilde{t}}_{s}\|^{2}},
\delta_{g}\!=\!\mathrm{CRB}_{g}\!\!=\!\!\frac{1}{\frac{\alpha^{\!2}\|\boldsymbol{\lambda}\|^{\!2}\|\mathbf{\tilde{t}}_{s}\!\|^{\!2}}{\upsilon_{n}}
\!\!+\!\!\frac{\|\mathbf{t}_{r}\|^{2}}{\sigma_{n}^{2}}}.
\end{align}
Substituting the above expressions for $\delta_{\lambda_{q}}$ and $\delta_{g}$ into $\bar{\gamma}$ yields
\begin{align}\label{EQ16}
\bar{\gamma}\!=\!\frac{\boldsymbol{\eta}^{T} \mathbf{\Xi} \boldsymbol{\eta}\left(1\!-\!\epsilon\right)}{\boldsymbol{\eta}^{T}\mathbf{\Xi}\boldsymbol{\eta} -2\mathbf{1}^{T}_{2Q\!+\!1}\mathbf{\Upsilon}\boldsymbol{\eta}+C},
\end{align}
where
\begin{align}
&\mathbf{\Upsilon}\!\!=\!\!\frac{\alpha^{2}\upsilon_{g}\|\mathbf{\tilde{t}}_{s}\|^{2}}
{\upsilon_{n}}\mathbf{R}_{\lambda}^{2}\!\!+\!\!\frac{\alpha^{2}\upsilon_{g}\!\|\mathbf{\tilde{t}}_{s}\|^{2}}
{\upsilon_{n}}\mathrm{tr}\!\big\{\!\mathbf{R}_{\lambda}\!\big\}\mathbf{R}_{\lambda}
\!\!+\!\!\frac{\upsilon_{g}\!\|\mathbf{t}_{r}\|^{2}}{\sigma_{n}^{2}}\mathbf{R}_{\lambda},\\
&\mathbf{\Xi}\!\!=\!\!\mathbf{\Upsilon}+\mathbf{R}_{\lambda}\!+\!\left(\upsilon_{h}\!
+\!\frac{\upsilon_{n}\|\mathbf{t}_{r}\|^{2}}{\sigma_{n}^{2}\alpha^{2}\|\mathbf{\tilde{t}}_{s}\|^{2}}\right)
\mathbf{I}_{2Q\!+\!1},\\
&C\!\!=\!\!\mathbf{1}^{T}_{2Q\!+\!1}\mathbf{\Upsilon}\mathbf{1}_{2Q\!+\!1}\!\!+\!\!\left(\upsilon_{g}
\!\!+\!\!\frac{1}{\alpha^{2}}\!\right)\!\!\left[\frac{\upsilon_{h}\alpha^{2}\|\mathbf{\tilde{t}}_{s}\|^{2}}
{\upsilon_{n}}\!+\!\!\frac{\|\mathbf{t}_{r}\|^{2}}{\sigma_{n}^{2}}\right]\!\frac{\sigma_{n}^{2}}{P_{s}},
\end{align}
and $\mathbf{R}_{\lambda}\!\!=\!\!\mathrm{diag}\!\left(\upsilon_{\!-\!Q}\ldots\upsilon_{Q}\right)$. On setting $\varphi^{*}\!\!=\!\!\left(\boldsymbol{\eta}^{*}\right)^{T}\mathbf{\Xi}\boldsymbol{\eta}^{*}$,
the optimization for $\boldsymbol{\eta}$ transforms to
\begin{align}
&\max_{\boldsymbol{\eta}} \quad 2\mathbf{1}^{T}_{2Q\!+\!1}\mathbf{\Upsilon}\boldsymbol{\eta},\label{EQ_max}\\
&\mathrm{s.t.}\quad\boldsymbol{\eta}^{T}\mathbf{\Xi}\boldsymbol{\eta}\!=\!\varphi^{*}.\label{EQ_max_con}
\end{align}

Clearly, the optimization problem described in \eqref{EQ_max}
constrained by \eqref{EQ_max_con} is concave, and
$\boldsymbol{\eta}^{*}$ can be directly obtained from the Lagrange
dual function as
\begin{align}\label{EQ10}
\boldsymbol{\eta}^{*}\!=\!\frac{\sqrt{\varphi^{*}}\mathbf{\Xi}^{-1}\mathbf{\Upsilon}\mathbf{1}_{2Q\!+\!1}}{\sqrt{\mathbf{1}_{2Q\!+\!1}^{T}\mathbf{\Upsilon}
\mathbf{\Xi}^{-1}\mathbf{\Upsilon}\mathbf{1}_{2Q\!+\!1}}}.
\end{align}
Substituting \eqref{EQ10} back to \eqref{EQ16}, the optimization
problem becomes
\begin{align}
&\max_{\varphi}\quad\frac{\varphi}{\varphi-2\sqrt{\mathbf{1}_{2Q\!+\!1}^{T}\mathbf{\Upsilon}
\mathbf{\Xi}^{-1}\mathbf{\Upsilon}\mathbf{1}_{2Q\!+\!1}}\sqrt{\varphi}+C},\\
&\mathrm{s.t.}\quad \varphi > 0,
\end{align}
whose solution is $
\varphi^{*}=\frac{C^{2}}{\mathbf{1}_{2Q\!+\!1}^{T}\mathbf{\Upsilon}
\mathbf{\Xi}^{-1}\mathbf{\Upsilon}\mathbf{1}_{2Q\!+\!1}}$ leading to
\begin{align}\label{EQ6}
\boldsymbol{\eta}^{*}=\frac{C\mathbf{\Xi}^{-1}\mathbf{\Upsilon}\mathbf{1}_{2Q\!+\!1}}{\mathbf{1}_{2Q\!+\!1}^{T}\mathbf{\Upsilon}
\mathbf{\Xi}^{-1}\mathbf{\Upsilon}\mathbf{1}_{2Q\!+\!1}}.
\end{align}

Combining the estimation for $\boldsymbol{\theta}$ and the restoration for $h\!\left(n\right)$'s, the proposed channel estimation algorithm is summarized in Table I. Moreover, the following proposition is given to demonstrate the effectiveness of the proposed algorithm:
\begin{proposition}
The iterative channel estimation algorithm is convergent, and it achieves lower MSE than that of the maximum likelihood (ML) method.
\end{proposition}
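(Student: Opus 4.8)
The plan is to establish the two assertions separately. For convergence, I would first observe that the two update rules \eqref{EQ2} and \eqref{EQ5} are precisely the block-coordinate minimizers of the MAP cost $\mathcal{L}\!\left(\boldsymbol{\theta}\right)$ in \eqref{EQ1}: holding $g$ fixed and solving $\partial\mathcal{L}/\partial\lambda_{q}^{H}=0$ reproduces \eqref{EQ2}, while holding the $\lambda_{q}$'s fixed and solving $\partial\mathcal{L}/\partial g^{H}=0$ reproduces \eqref{EQ5}. Hence the iteration is exactly alternating minimization of $\mathcal{L}$ over the two blocks $\left\{\lambda_{q}\right\}$ and $g$. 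Because every summand of $\mathcal{L}$ is a squared two-norm or a nonnegative quadratic, $\mathcal{L}\!\left(\boldsymbol{\theta}\right)\geq0$ is bounded below; and since each half-step performs an exact minimization over one block with the other frozen, the sequence $\big\{\mathcal{L}\!\left(\boldsymbol{\theta}^{(k)}\right)\big\}$ is monotonically nonincreasing. A monotone sequence bounded below converges, establishing convergence of the objective. Moreover, each block subproblem is strictly convex (its quadratic coefficient is strictly positive owing to the prior terms $\upsilon_{q}^{-1}$ and $\upsilon_{g}^{-1}$) with a unique minimizer, and $\mathcal{L}$ is coercive, so the iterates stay bounded and converge to a stationary point of $\mathcal{L}$.

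For the MSE comparison I would exploit the conditional linear-Gaussian structure. Conditioned on $g$, the model $\mathbf{r}_{q}=\alpha g\lambda_{q}\mathbf{\tilde{t}}_{s}+\mathbf{w}_{q}$ is linear in $\lambda_{q}$ under the nominal Gaussian noise of variance $\upsilon_{n}$, with zero-mean Gaussian prior of variance $\upsilon_{q}$ on $\lambda_{q}$. The ML estimator ignores this prior and returns the unregularized least-squares solution $\hat{\lambda}_{q}^{\mathrm{ML}}=\big(\alpha g^{H}\mathbf{\tilde{t}}_{s}^{H}\mathbf{r}_{q}\big)/\big(\alpha^{2}|g|^{2}\|\mathbf{\tilde{t}}_{s}\|^{2}\big)$, whereas \eqref{EQ2} is the Wiener/MMSE solution that augments the denominator by the prior-precision term $\upsilon_{n}/\upsilon_{q}$. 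Writing the per-coefficient data information as $\mathcal{I}_{q}=\alpha^{2}|g|^{2}\|\mathbf{\tilde{t}}_{s}\|^{2}/\upsilon_{n}$, the ML variance is $1/\mathcal{I}_{q}=\mathrm{CRB}_{\lambda_{q}}$, while the MAP posterior variance is $1/\!\big(\mathcal{I}_{q}+\upsilon_{q}^{-1}\big)$; since $\upsilon_{q}^{-1}>0$, the MAP MSE is strictly smaller for every $q$. The identical reasoning applies to $g$ through \eqref{EQ3}--\eqref{EQ5}: the prior precision $\upsilon_{g}^{-1}$ adds to the data information appearing in $\mathrm{CRB}_{g}$, strictly shrinking its MSE. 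Summing the per-parameter inequalities yields the claimed uniform MSE reduction of MAP over ML.

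The hard part will be making the MSE claim rigorous for the genuinely joint problem, which is bilinear in $\big(g,\lambda_{q}\big)$ and hence nonconvex, and whose true noise $\mathbf{w}_{q}$ is non-Gaussian by construction. In that setting the posterior is not Gaussian, so MAP and MMSE no longer coincide exactly and the clean positive-definite information inequality holds only conditionally. I would bridge this gap with the same high-SNR regime already used in the AESNR derivation: there the MAP MSEs converge to $\mathrm{CRB}_{\lambda_{q}}$ and $\mathrm{CRB}_{g}$, the conditional argument becomes exact to leading order in $\sigma_{n}^{2}/P_{s}$, and the added prior-precision terms guarantee that the MAP information matrix dominates its ML counterpart in the positive-definite ordering, so the inverse, i.e. the error covariance, is correspondingly smaller. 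A secondary caveat is that alternating minimization certifies convergence only to a stationary point rather than to the global MAP optimum; since the proposition asserts only convergence, a stationary limit suffices, and one may further appeal to the good initialization \eqref{EQ4} to argue that the limit lies in the basin of the desired solution.
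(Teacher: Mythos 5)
Your proposal is correct and takes essentially the same approach as the paper: convergence follows from monotone improvement of the bounded objective $\mathcal{L}\!\left(\boldsymbol{\theta}\right)$ under exact coordinate-wise updates, and the MSE comparison reduces to the fact that the MAP denominators augment the ML information by the prior precisions $\upsilon_{n}/\upsilon_{q}$ and $1/\upsilon_{g}$, which is exactly what the paper's closed-form expressions \eqref{EQ11} and \eqref{EQ12} exhibit. The only discrepancy is cosmetic: the paper states that $\mathcal{L}$ ``strictly increases'' after each step, which conflicts with its own definition of $\mathcal{L}$ as the minimization objective in \eqref{EQ1}, so your nonincreasing, bounded-below phrasing is the internally consistent version of the same monotonicity argument.
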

\begin{proof}
Each iteration consists of $\left(2Q\!+\!2\right)$
steps. Denote the $i$th entry of $\boldsymbol{\theta}$ by $\theta_{i}$,
and the updated estimate of $\theta_{i}$ denoted by
$\hat{\theta}_{i}^{\mathrm{new}}$ satisfies
$\mathcal{L}\!\left(\hat{\theta}_{i}^{\mathrm{new}}\right)\!>\!\mathcal{L}
\!\left(\hat{\theta}_{i}\right)$. This indicates that $\mathcal{L}\!\left(\boldsymbol{\theta}\right)$ strictly increases after each step as well as after one round of
iteration. It is known that $\mathcal{L}
\!\left(\boldsymbol{\theta}\right)$ is continuous with respect to $\theta_{i}$ and $\mathcal{L}
\!\left(\boldsymbol{\theta}\right)\!<\!+\infty$. Thus, it is concluded that the iterative algorithm is convergent.

From \eqref{EQ2}, the MAP estimate of $g$ with a given $\boldsymbol{\lambda}$ is
\begin{align}
\hat{g}^{\mathrm{MAP}}\!\!\!\!=\!\!\frac{\frac{\alpha^2\|\boldsymbol{\lambda}\|^{2}\|
\mathbf{\tilde{t}}_{s}\|^{2}}{\upsilon_{n}}\!+\!\frac{\|\mathbf{t}_{r}\|^{2}}{\sigma_{n}^{2}}}{\frac{\alpha^2\|\boldsymbol{\lambda}\|^{2}\|
\mathbf{\tilde{t}}_{s}\|^{2}}{\upsilon_{n}}\!\!+\!\!\frac{\|\mathbf{t}_{r}\|^{2}}{\sigma_{n}^{2}}\!\!+\!\!\frac{1}{\upsilon_{g}}}
g\!+\!\frac{\sum_{q\!=\!-Q}^{Q}\!\!\frac{\alpha \lambda_{q}^{H}\mathbf{\tilde{t}}_{s}^{H}\mathbf{w}_{q}}{\upsilon_{n}}\!\!+\!\!\frac{\mathbf{t}_{r}^{H}\mathbf{w}_{Dr}
}{\sigma_{n}^{2}}}{\frac{\alpha^2\|\boldsymbol{\lambda}\|^{2}\|
\mathbf{\tilde{t}}_{s}\|^{2}}{\upsilon_{n}}\!\!+\!\!\frac{\|\mathbf{t}_{r}\|^{2}}{\sigma_{n}^{2}}\!\!+\!\!\frac{1}{\upsilon_{g}}},\nonumber
\end{align}
whose MSE is calculated as
\begin{align}\label{EQ11}
\delta_{g}^{\mathrm{MAP}}\!\!\!\!&=\!\mathcal{E}\!\Big\{\!\big\|\hat{g}^{\mathrm{MAP}}\!\!\!-\!g\big\|^{\!2}
\!\Big\}\!\!\nonumber\\&=\!\!\frac{1}{\frac{\alpha^{\!2}\|\boldsymbol{\lambda}\|^{\!2}\|
\mathbf{\tilde{t}}_{s}\|^{\!2}}{\upsilon_{n}}\!\!+\!\!\frac{\|\mathbf{t}_{r}\|^{\!2}}{\sigma_{n}^{2}}}
\frac{\upsilon_{g}}{\upsilon_{g}\!\!+\!\!\frac{1}{\frac{\alpha^{\!2}\|\boldsymbol{\lambda}\|^{\!2}\|
\mathbf{\tilde{t}}_{s}\|^{\!2}}{\upsilon_{n}}\!+\!\frac{\|\mathbf{t}_{r}\|^{\!2}}{\sigma_{n}^{2}}}}.
\end{align}
The ML estimate of $g$ gives
\begin{align}
\hat{g}^{\mathrm{ML}}\!\!=\!g\!+\!\frac{\sum_{q\!=\!-Q}^{Q}\!\!\frac{\alpha \lambda_{q}^{H}\mathbf{\tilde{t}}_{s}^{H}\mathbf{w}_{q}}{\upsilon_{n}}\!\!+\!\!\frac{\mathbf{t}_{r}^{H}\mathbf{w}_{Dr}
}{\sigma_{n}^{2}}}{\frac{\alpha^2\|\boldsymbol{\lambda}\|^{2}\|
\mathbf{\tilde{t}}_{s}\|^{2}}{\upsilon_{n}}\!\!+\!\!\frac{\|\mathbf{t}_{r}\|^{2}}{\sigma_{n}^{2}}},\nonumber
\end{align}
whose MSE is calculated as
\begin{align}\label{EQ12}
\delta_{g}^{\mathrm{ML}}\!&=\!\mathcal{E}\!\Big\{\!\big\|\hat{g}^{\mathrm{ML}}\!\!\!-\!g\big\|^{\!2}
\!\Big\}\!=\!\frac{1}{\frac{\alpha^2\|\boldsymbol{\lambda}\|^{2}\|
\mathbf{\tilde{t}}_{s}\|^{2}}{\upsilon_{n}}\!+\!\frac{\|\mathbf{t}_{r}\|^{2}}{\sigma_{n}^{2}}}.
\end{align}
Clearly, $\delta_{g}^{\mathrm{MAP}}\!\!<\!\delta_{g}^{\mathrm{ML}}$ always holds, and it can be also testified that $\delta_{\lambda_{q}}^{\mathrm{MAP}}\!\!<\!\!\delta_{\lambda_{q}}^{\mathrm{ML}}$ is satisfied similarly.
\end{proof}

\emph{Remark}: According to \eqref{EQ11} and \eqref{EQ12}, we see that $\delta_{g}^{\mathrm{MAP}}\!\!<\!\!\mathrm{CRB}_{g}$ and $\delta_{g}^{\mathrm{ML}}\!\!=\!\!\mathrm{CRB}_{g}$. This is because the proposed MAP estimation algorithm is biased since $\mathcal{E}\!\{\hat{g}^{\mathrm{MAP}}\!\}\!\!\neq\!\!g$.

\begin{table}[!t]
\center
\caption{Iterative Channel Estimation Algorithm}
\begin{tabular}{p{3.4in}@{}}
\hline
\begin{itemize}
\item \textbf{Initialize} $\hat{g}$ in accordance with \eqref{EQ4}; i\_Index$=$i\_Times($>0$)
\item \textbf{Repeat}
\begin{itemize}
\item[-] For each $q$, update $\hat{\lambda}_{q}$ by substituting $\hat{g}$ into \eqref{EQ2}.
\item[-] Update $\hat{g}$ by substituting $\hat{\lambda}_{q}$'s into \eqref{EQ5}.
\item[-] i\_Index$=$i\_Index$-1$
\end{itemize}
\item \textbf{Until} i\_Index$\leq 0$ is satisfied.
\item \textbf{Calculate} the optimal $\boldsymbol{\eta}^{*}$ according to \eqref{EQ6}.
\item \textbf{Restore} $\hat{h}\!\left(n\right)$'s according to \eqref{EQ7} by using $\hat{\lambda}_{q}$'s and $\boldsymbol{\eta}^{*}$.
\item \textbf{Return} $\hat{h}\!\left(n\right)$'s and $\hat{g}$.
\end{itemize}
\\ \hline
\end{tabular}
\end{table}

\section{Numerical Results}

Numerical results are provided to evaluate the
performance of the CE-BEM based MAP (C-MAP) channel
estimation algorithm. The AL channel
$\big\{\!h\!\left(n\right)\!\!\big\}$ and BL channel $g$ are generated from the spatial channel model (CSM) in \emph{3GPP TR 25.996} \cite{13}. The parameters are set as $N_{s}\!=\!800$, $N_{r}\!=\!4$ and $\mathrm{i\_Times}\!\!\!=\!\!\!10$. We assume binary-phase-shift-keying
(BPSK) modulation for $\big\{\!b\!\left(n\right)\!\!\big\}$, while
$\mathbf{\tilde{t}}_{s}$ is selected as the $2$nd column of the $N_{p}\!\times\!N_{p}$ discrete Fourier
transform (DFT) matrix and $\mathbf{t}_{r}$ is as the $3$rd column of the $N_{r}\!\times\!N_{r}$ DFT matrix. The transmit power $P_{s}$ and
$P_{r}$ are set to be equal, and the noise variance is set to be of
unit value; thus the SNR is equal to $P_{s}$.

In Fig. \ref{fig2}, the average MSEs of both AL and BL channels for
C-MAP estimation are compared with that for ML estimation. It is
observed that the proposed C-MAP estimation outperforms the
traditional ML method since C-MAP achieves lower MSEs of both
AL and BL channels than ML. Moreover, it is seen that the MSE of
the AL channel for the ML method is not convergent. This is because
random generation of $g$ would result in singularity, e.g.,
$g\!\!\rightarrow\!\!0$, leading to
$\Delta\!\lambda_{q}\!\!\rightarrow\!\!\infty$ for the ML method, while
the proposed C-MAP algorithm is robust at the singularity. In
Fig.\ref{fig3}, we evaluate the AESNR performance for the optimal
weighted approach (OWA) to channel restoration. It is seen that OWA
obtains higher AESNRs than the baseline (restoring according to
CE-BEM), especially in the low SNR region, and it draws near to the
baseline as SNR increases.

\begin{figure}[!t]
\center
  \includegraphics[width=0.5\textwidth]{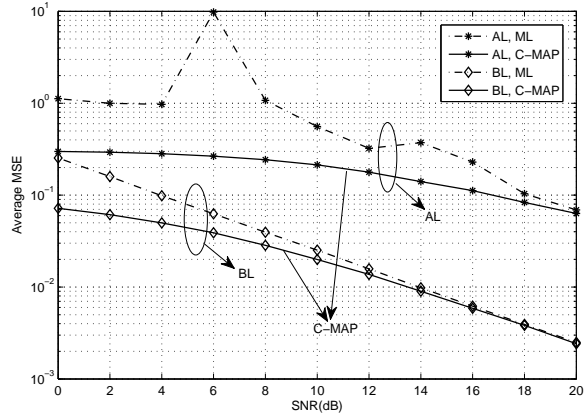}
 \caption{Average MSE versus SNR for different estimation methods.} \label{fig2}
\end{figure}

\begin{figure}[!t]
\center
  \includegraphics[width=0.5\textwidth]{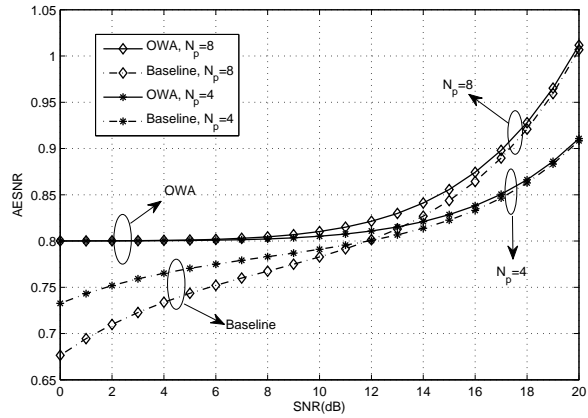}
 \caption{AESNR versus SNR for different channel restoring methods.} \label{fig3}
\end{figure}

\section{Conclusions}

A superimposed-segment training design has been
proposed to decrease the training overhead and enhance the channel
estimation accuracy in uplink C-RANs. Based on the training design,
a CE-BEM based MAP channel estimation algorithm has been developed,
where the BEM coefficients of the time-varying radio AL and the
channel fading of the quasi-static wireless BL are first obtained,
then the time-domain channel samples of AL are restored from
maximizing AESNR. Simulation results have demonstrated that the
proposed algorithm declines the estimation MSE
and increases the AESNR.

\end{document}